\tiny\color{gray}, 
\newtheorem{theorem}{Theorem}
\newtheorem{lemma}{Lemma}
\newtheorem{problem}{Problem}
\theoremstyle{remark}
\newcommand{\black}[1]{\textcolor{black}{#1}}
\newcommand{\mac}[1]{\mathcal{#1}}
\newcommand{\mtr}[1]{\mathrm{#1}}
\DeclareRobustCommand{\erase}{\bgroup\markoverwith{\textcolor{red}{\rule[.5ex]{2pt}{0.4pt}}}\ULon}
\newcommand{\coloneq}{:=}
\begin{document}


\title{\mbox{Faster Quantum Algorithm for Multiple Observables Estimation in Fermionic Problems}}

\author{Yuki Koizumi}
\email{koizumiyuki903@gmail.com}
 \affiliation{Department of Applied Physics, University of \mbox{Tokyo, 7-3-1 Hongo, Bunkyo-ku, Tokyo 113-8656, Japan}}
\author{Kaito Wada}%
\affiliation{\mbox{Graduate School of Science and Technology, Keio University, 3-14-1 Hiyoshi, Kohoku, Yokohama, Kanagawa, 223-8522, Japan}}

\author{Wataru Mizukami}
\affiliation{\mbox{Center for Quantum Information and Quantum Biology, Osaka
University, 1-2 Machikaneyama, Toyonaka, Osaka 560-0043, Japan}}
\affiliation{Graduate School of Engineering Science, \mbox{Osaka University, 1-3 Machikaneyama, Toyonaka, Osaka 560-8531, Japan}}

\author{Nobuyuki Yoshioka}
\email{ny.nobuyoshioka@gmail.com}
\affiliation{\mbox{International Center for Elementary Particle Physics, The University of Tokyo, 7-3-1 Hongo, Bunkyo-ku, Tokyo 113-0033, Japan}}




\begin{abstract}
Achieving quantum advantage in efficiently estimating collective properties of quantum many-body systems remains a fundamental goal in quantum computing. 
While the quantum gradient estimation (QGE) algorithm has been shown to achieve doubly quantum enhancement in the precision 
and the number of observables, it remains unclear whether one benefits in practical applications.
In this work, we present a generalized framework of adaptive QGE algorithm, and further propose two variants which enable us to estimate the collective properties of fermionic systems using the smallest cost among existing quantum algorithms. The first method utilizes the symmetry inherent in the target state, and the second method enables estimation in a single-shot manner using the parallel scheme. 
We show that our proposal offers a quadratic speedup compared with prior QGE algorithms in the task of fermionic partial tomography for systems with limited particle numbers.
Furthermore, we provide the numerical demonstration that, for a problem of estimating fermionic 2-RDMs, our proposals improve the number of queries to the target state preparation oracle by a factor of 100 for the nitrogenase FeMo cofactor and by a factor of 500 for Fermi-Hubbard model of 100 sites. 
\end{abstract}

\maketitle


\textit{Introduction.---}
Achieving quantum enhancement in extracting information from a complex quantum system is one of the central challenges in quantum information science. 
It was initially posed in the context of quantum sensing that there is a quadratic gap between ordinary statistical sampling and quantum mechanical limitation, known as the Heisenberg limit (HL)~\cite{giovannetti2004quantum, giovannetti2006metrology, holland1993interferometric, demkowicz2015quantum, brassard2002amplitude}.
Indeed, for single parameter estimation, the HL scaling has been realized experimentally via adaptive scheme of phase estimation algorithm~\cite{higgins2007entanglement, 2009NJPh...11g3023H}.
Beyond eigenphase of a given unitary, it has been found that quantum amplitude estimation (QAE) algorithm achieves the HL scaling for general expectation value estimation~\cite{brassard2002amplitude, rebentrost2018quantum, Obrien2022efficient, suzuki2020amplitude, rall2020estimating}.

It is natural to next ask whether quantum enhancement is feasible in gathering collective properties.
Leading candidates beyond straightforward application of QAE algorithm are based on the quantum gradient estimation (QGE) algorithm~\cite{jordan2005fast, gilyen2019gradient, huggins2022nearly, apeldoorn2023quantum, wada2024Heisenberg}, a multi-parameter extension of the phase estimation. 
QGE-based algorithms construct entanglement between the target system and probe system that collectively encodes the information of multiple observables, leading to doubly quantum enhancement; quadratic improvement not only regarding the measurement uncertainty but also the number of observables.
While initial proposals showed such a performance regarding the additive error~\cite{huggins2022nearly, apeldoorn2023quantum}, it was pointed out to fall short in the worst case scenario~\cite{wada2024Heisenberg}. This problem was remedied by relying on the adaptive scheme, whose performance can be bounded in terms of the root mean square error (MSE)~\cite{wada2024Heisenberg}. 
However, it remains totally nontrivial whether one succeeds in efficiently obtaining collective properties in practical problem setups. A representative scenario is the partial tomography of fermionic systems that is useful for assessing quantum many-body simulation~\cite{yoshioka2022hunting, beverland2022assessing, meglio2024quantum}, estimating energy~\cite{babbush2018encoding, burg2021quantum, bauer2020quantum, mcardle2020quantum, lee2021even}, and probing the entanglement structure~\cite{cheong2004many, gullans2019entanglement, grover2013entanglement}.

\begin{figure*}
\hspace{-1cm}
\includegraphics[width=1.05\linewidth]{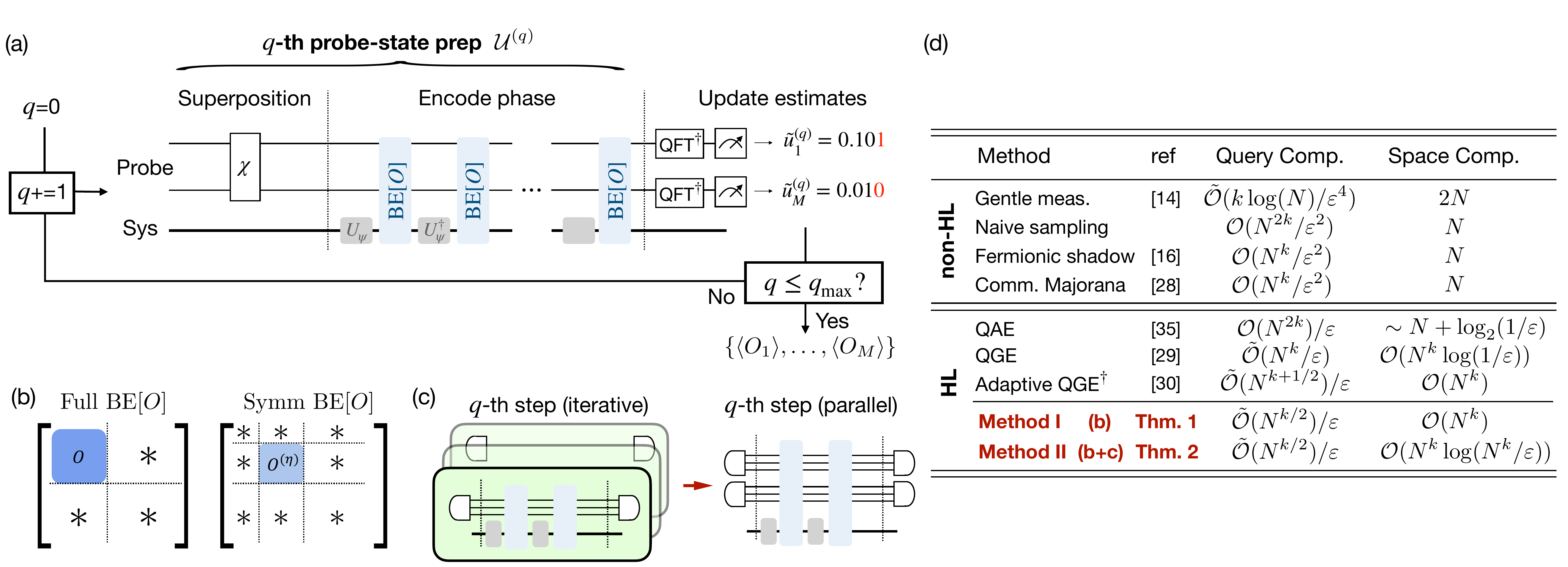}
\caption{
\justifying{
\label{fig:wide}
Schematic illustration of our proposal. (a) General structure of the adaptive QGE algorithm using oracular access to block encodings of $\{O_j\}_{j=1}^M$ and state preparation unitary $U_\psi^{(\dagger)}$. 
At $q$-th iteration, the subroutine $\mathcal{U}^{(q)}$ is called to encode the expectation values into the phases of the probe system, from which temporal estimates $\{\tilde{u}_j^{(q)}\}$ is obtained via measurement. The iteration is continued until $q_{\rm max} = \lceil \log_2(1/\varepsilon) \rceil$th step where desired precision $\varepsilon$ is reached.
(b) Block encoding of observables with and without utilizing the symmetry in the target system, exploited in Method I (see Theorem~\ref{thm:evaluation_symmetry_QGE_Informal}).
(c) Replacing iterative estimation with single-shot parallel readout. 
(d) Comparison of cost required to estimate all the elements of $N$-mode fermionic $k$-RDM with root MSE of $\varepsilon$ with a fixed particle number $\eta=k+\mathcal{O}(1)$ or $\eta=N-\mathcal{O}(1)$~\cite{huang2021information, Zhao:2020vxp, bonet2020nearly, huggins2022nearly, wada2024Heisenberg}. Here, Method II indicates the unified scheme for Method I and the parallel scheme (See Theorem~\ref{thm:evaluation_symmetry_parallel_QGE_main}). The dagger ($^\dagger$) indicates that the query complexity can be reduced quadratically if block-encodings of subspace-restricted operators $O_j^{(\lambda)} = \Pi_\lambda O_j \Pi_\lambda$ are given for any $j$, where $\Pi_\lambda$ is a projection operator onto the subspace labeled by $\lambda$.
}
}
\end{figure*}

In this work, we advance the QGE algorithm by presenting two novel variants, with the aim of tackling large-scale fermionic systems using the fewest queries among existing quantum algorithms (also see Fig.~\ref{fig:wide} for a summary).
The first variant dubbed Method I  exploits the symmetric structure in the target state, a ubiquitous property in quantum many-body problems, and the second variant, Method II, successfully performs single-shot adaptive estimation by utilizing the parallel scheme. 
The advantage of our proposals is demonstrated in fermionic systems with particle-number symmetry, for which Method II achieves the smallest query complexity under challenging systems such as the FeMo cofactor (FeMoco)  model or doped Fermi-Hubbard models. We concretely find that, when the target MSE for each element of 2-RDM is $\varepsilon=10^{-3}$, the total query count to the state preparation oracle is reduced by a factor of 100 for the FeMoco  model and by a factor of 500 for doped Fermi-Hubbard model on $100$ sites.



\textit{Problem Setup.---}
Let $|\psi\rangle$ be an $N$-qubit target state of interest, and
$\{O_j\}_{j=1}^M$ be a set of $2^N$-dimensional Hermitian observables satisfying $\|O_j\| \leq 1$ where $\norm{\cdot}$ denotes the spectral norm. We assume oracular access to the state preparation oracle
$U_\psi: \ket{0}^{\otimes N} \mapsto \ket{\psi}$ and its inverse, as well as block-encodings of observables $\{O_j\}$. 


Our goal is to estimate $\langle O_j \rangle \coloneq \bra{\psi} O_j \ket{\psi}$ within a root mean squared error $\varepsilon$, which can bound other error metrics such as the additive error~\cite{2009NJPh...11g3023H, PhysRevA.63.053804}. 
Since $U_\psi$ typically scales with the system size, the efficiency of the quantum algorithm is dominated by the total number of queries to $U_\psi$ and $U_\psi^\dagger$. As such, we evaluate the performance of quantum algorithms in terms of query complexity of $U_\psi$ and its inverse, informally summarized as follows.
\begin{problem}(Multiple Observables Estimation.) \label{prob:letter}
How do we construct estimators \(\{\hat{u}_j\}_{j=1}^M\) that satisfy, for all $j$,
\begin{align}
    \quad \mtr{MSE}[\hat{u}_j] = \mathbb{E} \left[ \left(\hat{u}_j - \langle \psi | O_j | \psi \rangle \right)^2 \right] \le \varepsilon^2,
\end{align}
using as few queries to \(U_\psi\) and \(U_\psi^\dagger\) for a target precision $\varepsilon \in (0, 1)$?
\end{problem}

\textit{General framework of adaptive QGE algorithm.---}
In similar to the phase estimation algorithm, the QGE algorithm first encodes the information of observable into phases of ancillary registers, or the {\it probe system}, and then read outs the values as binary information.
Adaptivity of the QGE algorithm resides in the state preparation subroutine of the probe system;  probe state at the $q$-th step is designed to estimate $q$-th digit of the expectation values.
Concretely, assume that one has temporary estimates of expectation values $\{\tilde{u}_j^{(q)}\}_{j=1}^M$ so that modified observables $A_j^{(q)} \coloneqq O_j - \tilde{u}_j^{(q)}$ satisfy 
\(
  |\langle \psi | A_j^{(q)} | \psi \rangle| \leq 2^{-q}.
\)
Then, to solve Problem~\ref{prob:letter}, it suffices to construct a subroutine $\mathcal{U}_\Upsilon^{(q)}$ that prepares copies of a probe state $\ket{\Upsilon(q)}$ that approximates the following ideal probe state with success probability at least $1-\delta^{(q)}$:
\begin{eqnarray}
  \ket{\Upsilon(q)} &\simeq& \sum_{\bm{x} } c_{\bm{x}} \,
  e^{i \phi_{\bm x}^{(q)}} \ket{\bm x}, 
  \phi_{\bm x}^{(q)} \propto 2^q\sum_{j=1}^M x_j  \langle \psi | A_j^{(q)} | \psi \rangle, \nonumber
\end{eqnarray}
where $\{c_{\bm x}\}$ is a fixed initial amplitude of the probe system.
Given such a probe state, inverse quantum Fourier transform and measurement allow us to obtain the updated temporary estimates $\{\tilde{u}_j^{(q+1)}\}$, in which we additionally have $(q+1)$-th bit information of $\{\langle O_j\rangle \}$.

We can easily see what determines the cost 
in this adaptive framework---probe-state preparation subroutine \(\mac{U}_\Upsilon\). Specifically, the total number of queries is given by
\begin{align}
    \text{(Total number of queries)} = \sum_{q=0}^{q_{\max}} \qty(\text{Cost of } \mac{U}_\Upsilon^{(q)}) .
\end{align}
We find that, to attain HL scaling, it suffices to ensure that $\qty(\text{Cost of } \mac{U}_\Upsilon^{(q)})$ is $ \mac{O} \qty(\aleph\cdot 2^q \log(1/\delta^{(q)}) ) $ for all $q$, where $\aleph$ is a positive factor independent of $q$ and $\varepsilon$. Indeed, the total query complexity is bounded if $\delta^{(q)}=c/8^{q_{\rm max} - q}$ where $c \in (0, 3/(8(1+\pi)^2)]$ as
\begin{align}
    \sum_{q=0}^{q_{\max}} \qty(\text{Cost of } \mac{U}_\Upsilon^{(q)}) &\leq 
     \aleph \sum_{q=0}^{q_{\max}} 2^q \log(1/\delta{(q)
     })  \nonumber\\
     &\leq\aleph \cdot 2^{q_{\max +1}} \log(8/c) =   \mac{O} (\aleph)/\varepsilon. \nonumber
\end{align}
This indicates that it is essential to reduce the prefactor \(\aleph\) without affecting its dependence on \(q\) and \(\varepsilon\).
Motivated by this requirement, we present strategies for reducing $\aleph$.

{{\it Method I: Observables estimation under symmetry.---}
One crucial property ubiquitous among numerous quantum many-body systems is the presence of symmetry in the target state $|\psi\rangle$. 
This motivates us to consider the decomposition of target observables into direct-sum form as
\begin{eqnarray}
    O_j = \bigoplus_{\lambda} O_j^{(\lambda)},\hfill
\end{eqnarray}
where \(O_j^{(\lambda)}\) is the component acting on the subspace labeled by $\lambda$.
The central idea of Method I is that, the complexity of the QGE algorithm should be determined only by the structure of symmetric subspace of the interest rather than the entire Hilbert space.

By building upon a symmetry-tailored framework of quantum singular value transformation (see Appendix~\ref{app:symmetric-qsvt}), we can prove the following theorem that explicitly quantifies the query complexity of our method.


\begin{theorem}(Observables estimation under symmetry.)
\label{thm:evaluation_symmetry_QGE_Informal}
Assume that target state $|\psi\rangle$ is supported on symmetric subspace labeled by $\lambda$.
Then, there exists a quantum algorithm that outputs a sample from estimators \(\{\hat{u}_j\}_{j=1}^M\) for \(\{\ \langle O_j \rangle\} \) satisfying
\begin{align}
\label{eq:MSE_symmetric_QSVT}
\max_{j=1,2,\ldots,M} \operatorname{MSE}[\hat{u}_j] \le \varepsilon^2,
\end{align}
using
\begin{align}
\label{eq:eval_query_sym_QGE_main}
\mathcal{O}\left( \varepsilon^{-1} \sqrt{ \norm{ \sum_{j=1}^M \left[O^{(\lambda)}_j\right]^2 } \log d_\lambda} \log M \right)
\end{align}
queries to the state-preparation oracles \(U_\psi\) and \(U_\psi^\dagger\) in total, where
$
d_\lambda
$
is the dimension of the symmetric subspace labeled by $\lambda$.
\end{theorem}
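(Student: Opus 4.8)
The plan is to instantiate the adaptive framework above by exhibiting a probe-state preparation subroutine $\mathcal{U}_\Upsilon^{(q)}$ whose cost is $\mathcal{O}\!\left(\aleph\cdot 2^q\log(1/\delta^{(q)})\right)$ with the $q$- and $\varepsilon$-independent prefactor $\aleph=\sqrt{\|\sum_{j}[O_j^{(\lambda)}]^2\|\,\log d_\lambda}\,\log M$, and then to read off the total query count $\mathcal{O}(\aleph)/\varepsilon$ from the telescoping bound already established. The entire content of the theorem is that every cost-determining operation can be confined to the $d_\lambda$-dimensional symmetric block.

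First I would use the support assumption $\Pi_\lambda|\psi\rangle=|\psi\rangle$ to replace each observable by its restriction. Since $\langle\psi|O_j|\psi\rangle=\langle\psi|\Pi_\lambda O_j\Pi_\lambda|\psi\rangle=\langle\psi|O_j^{(\lambda)}|\psi\rangle$, every block-encoding of $O_j$ inside the subroutine may be swapped for one of $O_j^{(\lambda)}$. Invoking the symmetry-tailored quantum singular value transformation of Appendix~\ref{app:symmetric-qsvt}, all subsequent singular-value transformations act nontrivially only on the symmetric block, so the effective dimension entering every normalization and concentration estimate is $d_\lambda$ rather than $2^N$.

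Next I would build the phase oracle encoding $\phi_{\bm{x}}^{(q)}\propto 2^q\sum_j x_j\langle\psi|A_j^{(q),(\lambda)}|\psi\rangle$ with $A_j^{(q),(\lambda)}=O_j^{(\lambda)}-\tilde{u}_j^{(q)}$. The known classical shift $\tilde u_j^{(q)}$ contributes only an $\bm{x}$-diagonal phase $\exp(-i\,2^q\sum_j x_j\tilde u_j^{(q)})$ that is applied for free, so the query-costly object is the block-encoding of $H_{\bm{x}}=\sum_j x_j O_j^{(\lambda)}$ controlled on the probe register. I would realize it through the vertically stacked operator $G_\lambda|\phi\rangle=\sum_j|j\rangle\otimes O_j^{(\lambda)}|\phi\rangle$, which obeys $G_\lambda^\dagger G_\lambda=\sum_j[O_j^{(\lambda)}]^2$ and hence has norm $\sigma_\lambda:=\sqrt{\|\sum_j[O_j^{(\lambda)}]^2\|}$; sandwiching the resulting block-encoding between $U_\psi$ and $U_\psi^\dagger$ and applying QSVT converts the expectation value $\langle\psi|H_{\bm{x}}|\psi\rangle$ into the required phase. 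Because the probe amplitudes $\{c_{\bm{x}}\}$ realize effectively random sign patterns over the $M$ coordinates, I would bound the subnormalization on the support by the matrix-Bernstein estimate $\mathbb{E}_{\bm{x}}\|\sum_j x_j O_j^{(\lambda)}\|\lesssim\sigma_\lambda\sqrt{\log d_\lambda}$, where the dimension factor is $d_\lambda$ precisely by the symmetry restriction. Combining this subnormalization with the $2^q$ evolution time, the $\log(1/\delta^{(q)})$ overhead needed to boost the single-step success probability, and the $\log M$ depth required to prepare the probe amplitudes and the $M$-way selector yields the claimed $\aleph$.

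Finally I would assemble the pieces: substituting this $\aleph$ into the telescoping bound with $\delta^{(q)}=c/8^{q_{\max}-q}$ gives total query complexity $\mathcal{O}(\aleph)/\varepsilon$, and the adaptive update inherits the guarantee $\max_j\mathrm{MSE}[\hat u_j]\le\varepsilon^2$ directly from the general framework. I expect the hard part to be the middle step---certifying that the symmetry-tailored QSVT simultaneously lowers both the block-encoding normalization to $\sigma_\lambda$ and the concentration dimension to $d_\lambda$, while keeping the expectation-to-phase conversion accurate enough not to reintroduce a hidden $\varepsilon$-dependence. Controlling the $\bm{x}$-dependent subnormalization uniformly over the probe support, rather than merely in expectation, is the delicate quantitative point, and matrix concentration on the $d_\lambda$-dimensional block is the natural instrument.
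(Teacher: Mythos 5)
Your overall route is the same as the paper's: work through the general adaptive framework, use the subspace QSVT lemma (Lemma~\ref{lem:informal_subspace_QSVT}) so that uniform singular value amplification and the matrix-concentration bound only ever see the $\lambda$-block (norm parameter $\bigl\|\sum_j [O_j^{(\lambda)}]^2\bigr\|$ and dimension $d_\lambda$), then encode $\langle\psi|\sum_j x_j O_j^{(\lambda)}|\psi\rangle$ into phases via optimal Hamiltonian simulation sandwiching $U_\psi$ and $U_\psi^\dagger$, and read off the total cost from the telescoping bound with $\delta^{(q)}=c/8^{q_{\max}-q}$. Your handling of the classical shifts $\tilde u_j^{(q)}$ as free diagonal phases and your identification of the uniform-over-$\bm x$ subnormalization as the delicate point are both consistent with the paper's sketch.

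There is, however, one concrete error in your accounting: the origin of the $\log M$ factor. You attribute it to ``the depth required to prepare the probe amplitudes and the $M$-way selector,'' but the theorem counts queries to $U_\psi$ and $U_\psi^\dagger$ only; probe-amplitude preparation and the select oracle over the $M$ block-encodings consume no such queries, so under your reasoning $\log M$ would simply not appear in the stated metric and your $\aleph$ would be unjustified. The actual source is the repetition count $R^{(q)}=\mathcal{O}(\log M)$ of probe-state preparations per round: the subroutine's correctness condition requires $|\langle\psi|A_j^{(q)}|\psi\rangle|\le 2^{-q}$ for \emph{all} $j$ simultaneously at every round, so the coordinate-wise medians must succeed on all $M$ coordinates at once (a union bound), forcing $\mathcal{O}(\log(M/\delta^{(q)}))$ copies, each costing $\mathcal{O}(2^q\sigma_\lambda)$ queries. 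This distinction is not cosmetic: Method~II's entire improvement (Theorem~\ref{thm:evaluation_symmetry_parallel_QGE_main}) is turning this multiplicative $\log M$ into $\sqrt{\log M}$ by replacing the $R^{(q)}$ sequential median repetitions with a single-shot entangled readout over $\bigotimes_{r=1}^{R}\ket{\Upsilon(q)}$ — a move that would be meaningless if $\log M$ were a gate-depth overhead. Separately, a wording caution: you say each block-encoding of $O_j$ ``may be swapped for one of $O_j^{(\lambda)}$,'' but the theorem explicitly assumes no access to $\Pi_\lambda$ or to block-encodings of $O_j^{(\lambda)}$; the correct mechanism (which you do invoke right after) is that QSVT applied to the block-encoding of the full $O_j$ acts blockwise, so the amplification polynomial need only be accurate on the spectrum of the $\lambda$-block while remaining bounded on $[-1,1]$, the other blocks being irrelevant because $\Pi_\lambda\ket{\psi}=\ket{\psi}$ and all operations preserve the block structure.
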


Importantly, our proposed algorithm does not require direct access to the projection operator \(\Pi_\lambda\) or block encoding of $\{O_j^{(\lambda)}\}$; rather, the query complexity is reduced solely by leveraging prior knowledge on the presence of symmetry and $\lambda$.

\noindent
\textit{Sketch of Proof.} 
To intuitively grasp how the speed up is achieved, it is informative to overview the ``encode phase" part in Fig.~\ref{fig:wide}(a). The encoding consists of three steps:
(i) constructing a block-encoding for $M^{-1}\sum_{j=1}^M x_j O_j$ controlled by $\bm x$, (ii) amplifying the normalization factor of this block-encoding via uniform singular value amplification \cite{gilyen2019quantum, low2017hamiltonian}, which results in a block-encoding of $\sigma^{-1}\sum_{j=1}^M x_j O_j$ for constant fraction of $\bm x$, where $\sigma = \mathcal{O}(\|\sum_j O_j^2\| \log d)$,
(iii) use of quantum singular value transformation for the optimal Hamiltonian simulation~\cite{gilyen2019quantum}, sandwitching $U_\psi$ and its inverse inbetween block encodings.

Observe that  only the step (iii) requires calls to \(U_\psi\) and its inverse. For a nonzero \(t \in \mathbb{R}\), optimal Hamiltonian simulation \cite{low2019hamiltonian} requires \(\mathcal{O}(t) \) calls to a block-encoding of the Hamiltonian \(H\) to implement \(e^{iHt}\). Hence, constructing the oracle embedding multiple target phases demands \(\mathcal{O}(\sigma)\) calls to the block-encoding of \(\sigma^{-1} \sum_{j} x_j \langle O_j \rangle\), which implies that the number of queries to \(U_\psi\) (and its inverse) scale in the same way.

If all $O_j$ share this direct-sum form and the target state is restricted to a fixed symmetric subspace labeled by $\lambda$, then at stage (ii) it suffices to implement the block-encoding of $\sigma_\lambda^{-1}\sum_j x_j O_j^{(\lambda)}$. By Lemma~\ref{lem:informal_subspace_QSVT} in Appendix~\ref{app:symmetric-qsvt} and uniform singular value amplification~\cite{apeldoorn2023quantum, wada2024Heisenberg}, one can construct a block-encoding of $\sigma_\lambda^{-1}\sum_j x_j O_j^{(\lambda)}$ that is valid for nearly all $\bm x$, where
\[
\sigma_\lambda = \mathcal{O} \qty(\sqrt{\norm{\sum_j \left(O_j^{(\lambda)}\right)^2} \log d_\lambda} ),
\]
and \(d_\lambda\) denotes the dimension of the symmetric subspace labeled by $\lambda$. By employing this block-encoding in stage (iii) and analyzing the overall query complexity, we obtain the upper bound expressed in Eq.~\eqref{eq:eval_query_sym_QGE_main}
(See Sec.V.B in the accompanying paper for the complete proof).
\hfill \hfill \qed

\black{{\it Method II: Further use of parallel scheme.---}}
Next, we turn to another key ingredient for achieving speedup: parallel scheme. 
We note that our implementation does not employ conventional parallel scheme—in which many identical quantum circuits are executed concurrently—as such an approach does not reduce the total query complexity. Rather, our proposal enables a single-shot measurement at each iteration of adaptive scheme. Namely, instead of iteratively reading out from $\ket{\Upsilon(q)}$ for $R^{(q)}$ times to take the median to enhance success probability, we prepare enlarged entangled state $\bigotimes_{r=1}^R \ket{\Upsilon(q)}$ and extract the estimates simultaneously, as illustrated in Fig.~\ref{fig:wide}(c). Such an entanglement-assisted readout reduces the query complexity from $\mathcal{O}(\varepsilon^{-1} R\sqrt{M \log d})$ to $\mathcal{O}(\varepsilon^{-1} \sqrt{MR \log d})$, achieving quadratic speedup regarding $R$. 
Since it suffices to take $R^{(q)}=\mathcal{O}(\log(M))$ for each $q$, the factor $\aleph$ characterizing the complexity of the subroutine \(\mathcal{U}^{(q)}_{\Upsilon}\) becomes \(\aleph = \mathcal{O}(\sqrt{M \log M \log d})\).
By integrating the two novel proposals, we can develop an even more efficient QGE algorithm. The following theorem provides an explicit asymptotic evaluation, 
\begin{theorem}(Observables estimation under symmetry and single-shot parallel scheme.)
\label{thm:evaluation_symmetry_parallel_QGE_main}
Assume that target state $|\psi\rangle$ is supported on symmetric subspace labeled by $\lambda$.
Then, there exists a quantum algorithm that outputs a sample from estimators \(\{\hat{u}_j\}_{j=1}^M\) for \(\{\ \langle O_j \rangle\} \) satisfying
\begin{align}
\label{eq:MSE_symmetric_QSVT}
\max_{j=1,2,\ldots,M} \operatorname{MSE}[\hat{u}_j] \le \varepsilon^2,
\end{align}
using
\begin{align}
\label{eq:eval_query_sym_para_QGE_main}
\varepsilon^{-1}  \cdot \mathcal{O}\left( \sqrt{ \norm{ \sum_{j=1}^M \left[O^{(\lambda)}_j\right]^2 } \log d_\eta \log M} \right)
\end{align}
queries to the state-preparation oracles \(U_\psi\) and \(U_\psi^\dagger\) in total, where
$
d_\lambda
$
is the dimension of the symmetric subspace labeled by $\lambda$.
\end{theorem}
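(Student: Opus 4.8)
\noindent\textit{Sketch of Proof.}
The plan is to run the adaptive QGE iteration of the general framework, while building each probe-state subroutine $\mathcal{U}_\Upsilon^{(q)}$ from the two ingredients at once: the symmetry-tailored block-encoding of Method~I and the single-shot parallel readout of Method~II. First I would fix the iteration schedule exactly as in the general framework, taking $q_{\max} = \lceil \log_2(1/\varepsilon)\rceil$ and $\delta^{(q)} = c/8^{q_{\max}-q}$, so that the total query count collapses to the geometric sum $\sum_{q=0}^{q_{\max}}(\text{Cost of } \mathcal{U}_\Upsilon^{(q)}) = \mathcal{O}(\aleph)/\varepsilon$ as soon as the per-step cost is shown to be $\mathcal{O}(\aleph\cdot 2^q\log(1/\delta^{(q)}))$. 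The entire task then reduces to pinning down the prefactor $\aleph$.

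For a single copy of the probe state I would invoke Method~I without change: combining Lemma~\ref{lem:informal_subspace_QSVT} with uniform singular value amplification yields, for nearly all $\bm{x}$, a block-encoding of $\sigma_\lambda^{-1}\sum_{j} x_j O_j^{(\lambda)}$ with $\sigma_\lambda = \mathcal{O}(\sqrt{\|\sum_j (O_j^{(\lambda)})^2\|\log d_\lambda})$, and feeding this into the optimal-Hamiltonian-simulation stage costs $\mathcal{O}(2^q\sigma_\lambda)$ queries to $U_\psi$ and $U_\psi^\dagger$ to realize $|\Upsilon(q)\rangle$. The new ingredient is the success-probability amplification. Rather than drawing $R^{(q)}$ independent copies and taking a median---which multiplies the per-step cost by $R^{(q)}$---I would prepare the enlarged entangled probe $\bigotimes_{r=1}^{R^{(q)}}|\Upsilon(q)\rangle$ and perform a single inverse quantum Fourier transform and measurement across all registers, as in Fig.~\ref{fig:wide}(c). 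The key lemma to establish is that this entangled readout attains the confidence of $R^{(q)}$ independent median repetitions while its $R^{(q)}$-dependence improves from $R^{(q)}$ to $\sqrt{R^{(q)}}$, i.e.\ the quadratic-in-$R$ speedup announced for Method~II.

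Granting that lemma, the remaining bookkeeping is routine. To guarantee $\max_j \operatorname{MSE}[\hat{u}_j]\le\varepsilon^2$ all $M$ observables must be resolved simultaneously, so a union bound pushes the per-observable failure probability to $\delta^{(q)}/M$ and hence $R^{(q)} = \mathcal{O}(\log(M/\delta^{(q)}))$; under the chosen schedule the persistent part of this is $\mathcal{O}(\log M)$ uniformly in $q$, while the $\log(1/\delta^{(q)})$ contribution is absorbed by the geometric weighting $2^q$ in the sum. Combining the single-copy cost $\mathcal{O}(2^q\sigma_\lambda)$ with the $\sqrt{R^{(q)}}$ amplification gives a per-step cost $\mathcal{O}(2^q\sigma_\lambda\sqrt{\log M})$, so that
\begin{align}
\aleph = \mathcal{O}\left(\sqrt{\,\norm{\textstyle\sum_{j=1}^M [O_j^{(\lambda)}]^2}\,\log d_\lambda\,\log M\,}\right).
\end{align}
Summing the geometric series reproduces Eq.~\eqref{eq:eval_query_sym_para_QGE_main}, and the standard adaptive-estimation argument---each newly resolved bit is correct with probability at least $1-\delta^{(q)}$, so the residual error contracts geometrically---bounds the mean squared error below $\varepsilon^2$.

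The step I expect to be the main obstacle is the parallel-readout lemma itself: one must show that the subspace-restricted, singular-value-amplified block-encoding of Method~I composes coherently across the $R^{(q)}$ entangled copies, and that the joint measurement on $\bigotimes_r|\Upsilon(q)\rangle$ genuinely delivers the $\sqrt{R}$ improvement without the symmetry restriction degrading either the amplification or the union bound over the $M$ observables. Reconciling the two failure-probability budgets---one over the entangled copies and one over the observables---while certifying that the amplified block-encoding stays valid for a constant fraction of $\bm{x}$, is where the real work lies; every other step is a direct specialization of the separate Method~I and Method~II analyses.
\hfill\qed
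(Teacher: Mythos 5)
Your top-level structure matches the paper's: the same adaptive schedule ($q_{\max} = \lceil \log_2(1/\varepsilon)\rceil$, $\delta^{(q)} = c/8^{q_{\max}-q}$), the same reduction to bounding the prefactor $\aleph$, Method~I's subspace block-encoding via Lemma~\ref{lem:informal_subspace_QSVT} plus uniform singular value amplification, and the same final $\aleph = \mathcal{O}\bigl(\sqrt{\norm{\sum_j [O_j^{(\lambda)}]^2}\log d_\lambda \log M}\bigr)$. (The paper itself defers the complete argument to Sec.~V.C of the accompanying work, so at the level of this letter your bookkeeping — $R^{(q)} = \mathcal{O}(\log(M/\delta^{(q)}))$, absorption of the $\delta^{(q)}$-dependence into the geometric sum — is consistent with what is claimed.)

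However, there is a genuine gap at exactly the point you flag as "the main obstacle," and your proposed route to close it looks in the wrong place. You attribute the $\sqrt{R}$ gain to a \emph{readout} phenomenon — "a single inverse quantum Fourier transform and measurement across all registers" whose joint statistics would match $R$ median repetitions at $\sqrt{R}$ cost. No such joint-measurement lemma is needed, and none would deliver the speedup: in the paper's scheme the readout is unchanged from the sequential case (Algorithm~\ref{alg:unified_framework} applies $(\mathrm{QFT}_{G_p}^\dagger)^{\otimes M}$ \emph{to each copy separately} and takes coordinate-wise medians, so the median concentration bound and the union bound over $M$ observables are verbatim the standard ones). The quadratic improvement comes entirely from the \emph{preparation} side: the $R^{(q)}$-fold probe $\bigotimes_{r=1}^{R^{(q)}}\ket{\Upsilon(q)}$ is prepared as a \emph{single} QGE instance over $MR^{(q)}$ phase variables $\{x_j^{(r)}\}$, i.e.\ one coherent simulation of the controlled operator $\propto \sum_{r,j} x_j^{(r)} O_j^{(\lambda)}$. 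The uniform-singular-value-amplification normalization for this enlarged instance is
\begin{align}
\sigma_{\lambda,R} = \mathcal{O}\left(\sqrt{\norm{\textstyle\sum_{r=1}^{R^{(q)}}\sum_{j=1}^{M}\bigl[O_j^{(\lambda)}\bigr]^2}\,\log d_\lambda}\right) = \sqrt{R^{(q)}}\cdot \mathcal{O}\left(\sqrt{\norm{\textstyle\sum_{j}\bigl[O_j^{(\lambda)}\bigr]^2}\,\log d_\lambda}\right),
\end{align}
because duplicating the observable set $R$ times multiplies $\norm{\sum \cdot^2}$ by $R$ inside the square root. This is the entire mechanism of Method~II, and it composes with Method~I trivially since the direct-sum structure of each $O_j$ is untouched by the duplication; the "reconciliation of two failure-probability budgets" you anticipate as the hard part is just the unmodified median-plus-union-bound analysis applied to $R^{(q)}$ i.i.d.\ per-copy outcomes. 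Without identifying this preparation-side norm argument, your sketch leaves the one step that distinguishes Theorem~\ref{thm:evaluation_symmetry_parallel_QGE_main} from Theorem~\ref{thm:evaluation_symmetry_QGE_Informal} unproven, and a proof attempt built on a joint-readout concentration lemma would not succeed.
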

\noindent

For the complete proof of the Theorem, we guide the readers to Sec.V.C of accompanying paper~\cite{koizumi2025full}.


\black{\textit{Application to fermionic problems.---}}
When investigating electron correlation in fermionic systems, the key observables are the fermionic reduced density matrices (RDMs) \cite{lowdin1955quantum}. For integer $k \geq 1$, the elements of the $k$-RDM (${}^kD$) are given by
\begin{align}
    {}^k D^{\bm p}_{\bm q} \coloneq \bra{\psi} a_{p_1}^\dagger \cdots a_{p_k}^\dagger\, a_{q_1} \cdots a_{q_k} \ket{\psi},
\end{align}
where $p_j$ and $q_j$ label the $N$ fermionic modes, and $a_p^\dagger$, $a_q$ are creation and annihilation operators. 
Obviously, each operator yielding $k$-RDM element conserves the particle number, and thus Hermitianized operators such as $(^k D_{\bm p}^{\bm q} + {}^kD_{\bm q}^{\bm q})/2$ admit direct-sum representation labeled by $\eta$. We find that 
$\bigl\|\!\sum_j [O_j^{(\eta)}]^2 \bigr\| = \binom{\eta}{k} \binom{N-\eta+k}{k},$
which significantly reduces the query complexity in Theorems~\ref{thm:evaluation_symmetry_QGE_Informal} and \ref{thm:evaluation_symmetry_parallel_QGE_main}.


We assess the asymptotic cost of various strategies for estimating $k$-RDM elements to a root mean square error $\varepsilon$, with key findings illustrated in Fig.~\ref{fig:wide}(d). Here, both the real and imaginary parts of each $k$-RDM element are required to meet precision $\varepsilon$. Under the extreme condition $\eta = k + \mathcal{O}(1)$ or $\eta = N - \mathcal{O}(1)$, our proposals offer an additional speedup. Although Bell and gentle measurements~\cite{huang2021information}---like shadow tomography~\cite{aaronson2018shadow}---can exponentially reduce measurement overhead at the cost of precision, accurately capturing strongly correlated systems still requires high-precision estimates of $k$-RDM elements, granting our approach an asymptotic advantage when $\varepsilon \in o(N^{-k/6})$. Moreover, our method surpasses classical shadow algorithms and achieves a quartic speedup over the QAE algorithm, establishing the best asymptotic performance in the high-precision regime.


\begin{figure}[t]
    \centering
\includegraphics[width=0.9\linewidth]{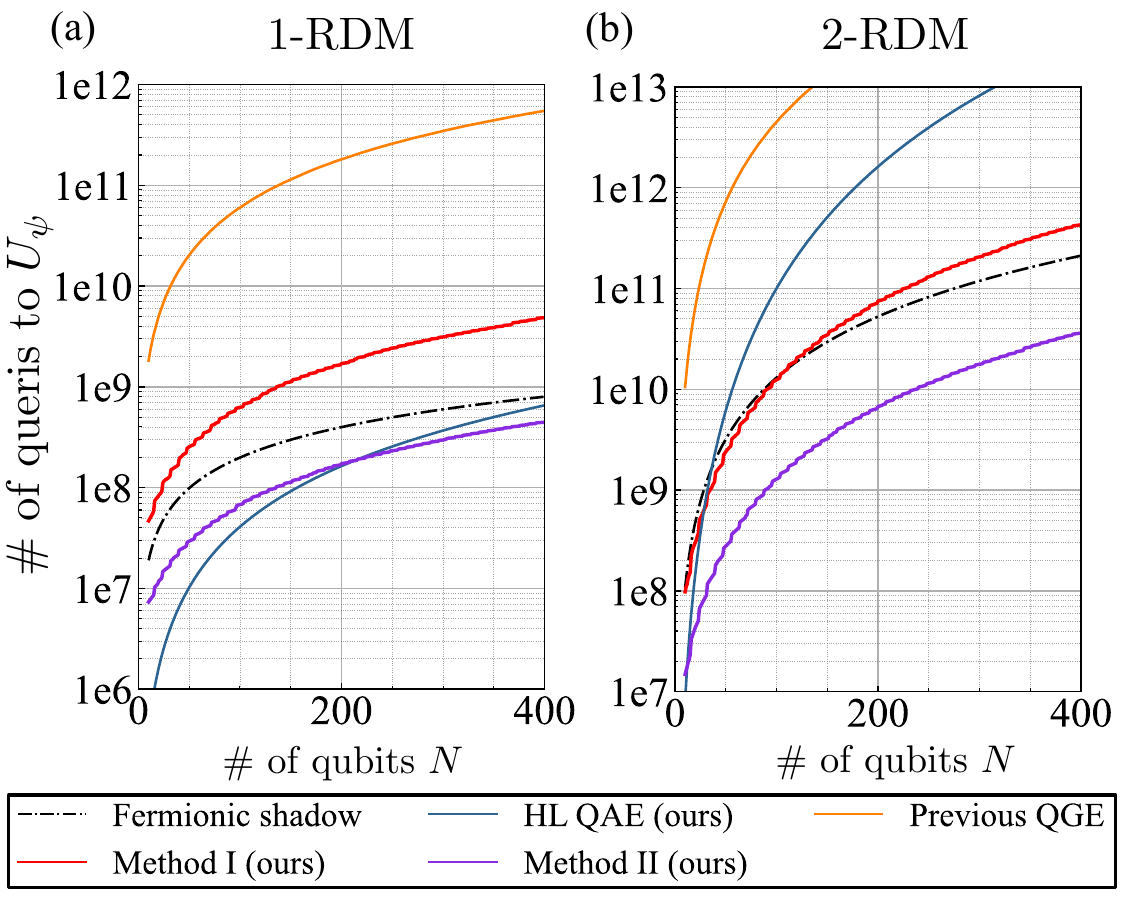}
    \caption{
    \justifying{The total query complexity in (a) $1$-RDM elements and (b) $2$-RDM elements estimation for $N$-qubit fermionic systems with $\eta = \lceil 7N/8  \rceil$ particles. Here, the target precision is set as $\varepsilon = 10^{-3}$. The approaches compared include, the Fermionic shadow tomography \cite{Zhao:2020vxp}, the quantum amplitude estimation (QAE) algorithm with HL scaling, the previous adaptive QGE algorithm \cite{wada2024Heisenberg}, and our proposals.  }
    }
    \label{fig:RDMvsN}
\end{figure}

\begin{figure}[t]
    \centering
\includegraphics[width=0.9\linewidth]{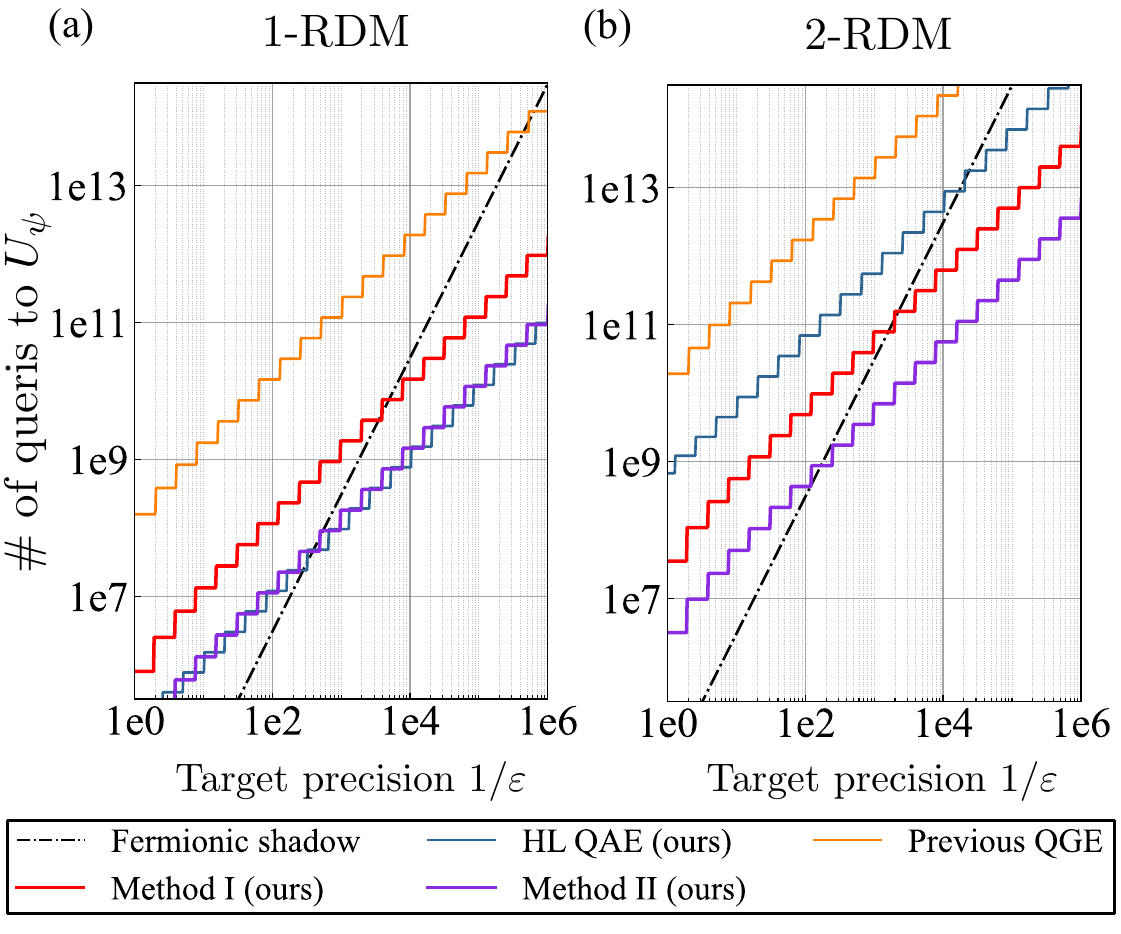}
    \caption{
    \justifying{The total query complexity in (a) $1$-RDM elements, (b) $2$-RDM elements, when varying a root mean squared error $\varepsilon$. Here, we consider the $k$-RDM elements estimation for $N = 152$ active space of FeMoco with $\eta = 113$ electrons \cite{li2019electronic}. The algorithms compared here are the same as those in Fig.~\ref{fig:RDMvsN}. }
    }
    \label{fig:RDMvserr}
\end{figure}

It is crucial to note, however, that asymptotic evaluations alone do not fully capture the practical performance of these algorithms; constant and logarithmic factors, as well as circuit overhead and implementation details, can significantly influence the total query count in real-world applications.
To complement our analysis, we also perform numerical evaluations of the query complexities for various algorithms in practical scenarios. 

Figure~\ref{fig:RDMvsN} shows the query count to estimate 1,2-RDM elements with accuracy of $\varepsilon=10^{-3}$ for fermionic systems with filling of $\eta = \lceil 7N/8 \rceil$. Such a setup reflects the unresolved phase diagram of doped Fermi-Hubbard model on 2D lattice. We find that our proposals excel at wide regimes; for 1-RDM estimation, the QAE algorithm proposed in accompanying paper~\cite{koizumi2025full} and Method II achieves the lowest number of queries for $N\leq 200$ and $N \geq 200$, respectively, and for 2-RDM estimation, the Method II is superior for any $N$.
Furthermore, for $k \ge 3$, since the asymptotic query complexity scales as \(\tilde{\mathcal{O}}\qty( \mqty(\eta \\ k) \mqty(N- \eta+k  \\ k)  )/\varepsilon\), the gap between our proposal and other methods becomes even more pronounced (see accompanying paper~\cite{koizumi2025full}). 

Scaling with the target accuracy $\varepsilon$ is also of great interest to practitioners. With the active space model of nitrogenase FeMo cofactor in our mind, we evaluate the query count for $N=152$ modes and $\eta=113$ particles, as shown in Fig.~\ref{fig:RDMvserr}. For the target precision of $\varepsilon \lesssim 10^{-2}$, we find that our proposal yields the lowest query count among existing methods.
In practice, the precision needed to resolve strong correlation effects often matches chemical accuracy (e.g., $\varepsilon = 10^{-3}$); systematic studies by Tilly \textit{et al.}\ show that statistical errors in individual 2-RDM elements must be reduced to $\varepsilon \lesssim 10^{-3}$ to meet the accuracy~\cite{tilly2021reduced}.

\textit{Summary and outlook.---}
In this work, we have proposed two novel variants of the quantum gradient estimation (QGE) algorithm: one harnesses the intrinsic symmetries of quantum systems, and the other further employs parallel schemeat the expense of extra qubits. To our knowledge, these algorithms are the first to achieve quartic reduction of the cost compared to the quantum amplitude estimation algorithm for fermionic system with specific particle numbers (see Fig.~\ref{fig:wide}(d)).
Our approach provides not only superior asymptotic performance but also practical advantages in estimating the expectation values of \(k\)-local fermionic operators. Because these operators are key to understanding electron correlation in materials and the complexity of atomic structure, our proposal is expected to be valuable across a broad spectrum of fields, including condensed matter physics, quantum chemistry, and high-energy physics.

Among numerous intriguing future directions, we highlight here the two most crucial ones.
First, it is practically crucial to investigate how error in algorithm and hardware noise affect the performance of the algorithm. It is nontrivial whether calibration technique for phase estimation can be straightforwardly applied to protect the measurement result~\cite{kimmel2015robust}.
Second, given that our algorithm relies on Hamiltonian simulation to encode the information of observables into phases, it is theoretically interesting and nontrivial to ask whether the algorithm can leverage fast-forwarding approaches, which can even compress the cost exponentially for commuting Hamiltonians~\cite{atia2017fast}.

\textit{Acknowledgements.---}
The authors wish to thank Dominic Berry, Yosuke Mitsuhashi, Takahiro Sagawa, and Kento Tsubouchi for fruitful discussions.
Y. K. is supported by the Program for Leading Graduate Schools (MERIT-WINGS).
K. W. was supported by JSPS KAKENHI Grant Number JP24KJ1963.
W. M. is supported by MEXT Quantum Leap Flagship Program
(MEXTQLEAP) Grant No. JPMXS0120319794,  the
JST COI-NEXT Program Grant No. JPMJPF2014,
the JST ASPIRE Program Grant No. JPMJAP2319,
and the JSPS
Grants-in-Aid for Scientific Research (KAKENHI) Grant
No. JP23H03819.
N.Y. is supported by JST Grant Number JPMJPF2221, JST CREST Grant Number JPMJCR23I4, IBM Quantum, JST ASPIRE Grant Number JPMJAP2316, JST ERATO Grant Number JPMJER2302, and Institute of AI and Beyond of the University of Tokyo.

\clearpage
\appendix

\section{Subspace quantum singular value transform} \label{app:symmetric-qsvt}

In the framework of quantum singular value transformation (QSVT)~\cite{gilyen2019quantum}, a block-encoding of $f(O)$ can be constructed using the block-encoding $B$ of a Hermitian operator $O$ and a sequence of phase gates $\{e^{i\phi_k Z}\}$, where $f$ is a polynomial function that acts on the singular values of $O$.
Assume that $O$ has a direct-sum structure $O = \bigoplus_{\lambda} O^{(\lambda)}$ where $\lambda$ labels each subspace, and that we are interested in a specific $\lambda.$
Notably, in such a case, we only need to approximate $f$ over the singular values of $O^{(\lambda)}$ to implement $f(O^{(\lambda)}).$ 
This observation underpins the QSVT only effective to the specific subspace, formally stated in the following lemma.
\begin{lemma}
    \label{lem:informal_subspace_QSVT}
    Assume that $O$ is a Hermitian operator with a direct-sum structure $O = \bigoplus_\lambda O^{(\lambda)}$, where each $O^{(\lambda)}$ acts on a subspace labeled by $\lambda$. If we have a block-encoding of $O$, then applying quantum singular value transformation circuit with a polynomial function $f$ results in a block-encoding of the transformed operator $\bigoplus_\lambda f(O^{(\lambda)})$. 
\end{lemma}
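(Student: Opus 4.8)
\textit{Sketch of proof.} The plan is to reduce the statement to the standard QSVT theorem of Gilyén \emph{et al.}~\cite{gilyen2019quantum} combined with an elementary fact about how polynomial functional calculus interacts with a direct sum; the only genuinely new ingredient is to verify that the block-diagonal structure of $O$ is inherited by the transformed operator without any change to the QSVT circuit.

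First I would recall the precise form of the circuit. Given a block-encoding $U$ of the Hermitian operator $O$, so that $O = (\langle 0|^{\otimes a}\otimes I)\,U\,(|0\rangle^{\otimes a}\otimes I)$, the QSVT circuit interleaves $U$, $U^\dagger$, and projector-controlled phase rotations $e^{i\phi_k(2\Pi - I)}$ with $\Pi = |0\rangle\!\langle 0|^{\otimes a}\otimes I$ supported on the ancilla flag alone. The phase angles $\{\phi_k\}$ encode $f$, and the theorem certifies that the resulting unitary block-encodes the transform of $O$, i.e.\ $f(O)$. The structural feature I want to extract is that the system register enters \emph{only} through $U$ and $U^\dagger$, while the phase gates act on nothing but the ancilla; hence the flagged output is determined purely by the singular data of $O$.

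Next I would invoke the direct-sum hypothesis. Writing $\mathcal{H}_{\mathrm{sys}} = \bigoplus_\lambda \mathcal{H}_\lambda$ with $O\,\mathcal{H}_\lambda \subseteq \mathcal{H}_\lambda$, one has $O^n = \bigoplus_\lambda (O^{(\lambda)})^n$ for every $n \ge 0$, so by linearity $f(O) = \bigoplus_\lambda f(O^{(\lambda)})$ for any polynomial $f$. Equivalently, the singular value decomposition of $O$ assembles block by block, each singular triple lying entirely within a single $\mathcal{H}_\lambda$, so the singular value transform acts triple-wise and respects the grading. Combining this identity with the QSVT theorem shows that the circuit block-encodes $\bigoplus_\lambda f(O^{(\lambda)})$, which is the claim.

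The hard part is conceptual rather than computational: one must be careful that the full unitary $U$ need \emph{not} itself respect the decomposition $\bigoplus_\lambda \mathcal{H}_\lambda$ --- only its block-encoded part $O$ does, and $U$ may well mix the subspaces in its off-block action. The resolution is exactly the structural remark above: the QSVT theorem guarantees that the flagged output depends solely on the singular value decomposition of $O$, which is $\lambda$-graded, so the off-block behaviour of $U$ is irrelevant. A secondary subtlety, well-definedness when an eigenvalue is shared across distinct blocks, is absorbed automatically by the basis-independent identity $f(O)=\bigoplus_\lambda f(O^{(\lambda)})$. \qed
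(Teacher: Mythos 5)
Your proposal is correct and takes essentially the same route as the paper: both reduce the claim to the standard QSVT theorem together with the identity $f(O)=\bigoplus_\lambda f\bigl(O^{(\lambda)}\bigr)$ for polynomial $f$. The only cosmetic difference is that you verify this identity via $O^n=\bigoplus_\lambda \bigl(O^{(\lambda)}\bigr)^n$ and linearity, whereas the paper assembles a block-diagonal eigendecomposition $O=U\bigl(\bigoplus_\lambda \Sigma^{(\lambda)}\bigr)U^\dagger$ and uses that polynomials commute with unitary conjugation; your explicit remark that the block-encoding unitary itself need not respect the $\lambda$-grading (only the encoded block $O$ does) is a point the paper leaves implicit.
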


\begin{proof}
From the direct sum decomposition, we can derive the eigenvalue decomposition of each $d_\lambda$-dimensional operator $O^{(\lambda)}$ and denote it as $O^{(\lambda)} = U^{(\lambda)} \Sigma^{(\lambda)} (U^{(\lambda)})^\dagger$ where $U^{(\lambda)}$ is a $d_\lambda \times d_\lambda$ unitary matrix. Since the $O^{(\lambda)}$ is Hermitian, there exists an eigenvalue decomposition $O^{(\lambda)} = U^{(\lambda)} \Sigma^{(\lambda)} (U^{(\lambda)})^\dagger$ where $U^{(\lambda)}$ is a unitary matrix and $\Sigma^{(\lambda)}$ is a diagonal matrix. This eigenvalue decomposition leads to 
\begin{align}
     O 
     = \bigoplus_{\lambda \in \Lambda} U^{(\lambda)}  \Sigma^{(\lambda) \dagger} U^{(\lambda)}.
\end{align}
Here, we define $U \coloneq \bigoplus_{\lambda \in \Lambda} U^{(\lambda)} $, and it is trivial to show that $U$ is unitary. Since $\bigoplus_{\lambda \in \Lambda} \Sigma^{(\lambda)} $ is a diagonal matrix, we can consider the eigenvalue decomposition of $O$ is equivalent to $U \bigoplus_{\lambda \in \Lambda} \Sigma^{(\lambda)}   (U)^\dagger $.

Now let us assume that we know a phase sequence $\{\phi_k\}$ for a degree-$m$ polynomial $f$.
Given a $a$-block-encoding $U_O$ of a Hermitian operator $O$, we can implement
\begin{align}
    f^{\mtr{(SV)}} (O) \coloneq U f \qty(  \bigoplus_{\lambda \in \Lambda } \Sigma^{(\lambda)}   ) U^\dagger.
\end{align}
Since $f$ is a polynomial, we can demonstrate that
\begin{align}
    U f \qty(  \bigoplus_{\lambda \in \Lambda } \Sigma^{(\lambda)}) U^\dagger &= f  \qty( U  \qty( \bigoplus_{\lambda \in \Lambda }  \Sigma^{(\lambda)} )  U^\dagger ) \\
    &= f \qty(\bigoplus_{\lambda \in \Lambda} U^{(\lambda)} \Sigma^{(\lambda)} (U^{(\lambda)})^\dagger )  \\
    &= f \qty( \bigoplus_{\lambda \in \Lambda} O^{(\lambda)} ) = \bigoplus_{\lambda \in \Lambda } f(O^{(\lambda)} ).
\end{align}
\end{proof}

\section{Pseudocode for general framework of adaptive QGE}

For the sake of completeness, here we provide the pseudocode to describe the general framework of adaptive QGE algorithm.

\begin{algorithm}[H]
    \caption{General framework of adaptive QGE algorithm for multiple observables}
  \label{alg:unified_framework}
  \begin{algorithmic}[1]
  \Statex \textbf{Input:} $\log_2 d$-qubit state preparation unitary $U_\psi$ and its inverse, $M$ observables $\{O_j\}_{j=1}^M$ of bounded spectral norm $\norm{O_j} \leq 1$ where  $M \geq 2\log_2 d + 24 $, 
  confidence parameter $c \in (0, \frac{3}{8(1 + \pi)^2}]$, target precision parameter $\varepsilon \in (0, 1)$, an integer $p\geq 1$, and a set of integers $\{R^{(q)}\}_{q=0}^{\lceil \log_2(1/\varepsilon) \rceil }$
  \smallskip

  \Statex \textbf{Subroutine:} a probe-state preparation subroutine $\mathcal{U}_{\Upsilon}^{(q)}(\{A_j\}_{j=1}^M)$.
  This process is ensured to work as follows:
  for given integer $q\geq 0$ and observables $\{A_j\}_{j=1}^M$ of bounded spectral norm $\norm{A_j} \leq 1$
  , the process $\mathcal{U}_{\Upsilon}^{(q)}$ prepares $R^{(q)}$ copies of a $pM$-qubit quantum state \(\ket{\Upsilon(q)}\) that is close in the Euclidean norm to the following ideal probe state under fixed set of amplitudes $\{c_{\bm x}\}$:
  \begin{eqnarray}
      \ket{\Upsilon(q) } \simeq \sum_{\bm x \in G_p^M} c_{\bm x} 
      e^{2\pi i {2^p}{\sum_{j=1}^M x_j 2^{q} \pi^{-1} \bra{\psi}A_j\ket{\psi}}} \ket{\bm x},\nonumber\\ \mbox{if}~~~ |\bra{\psi}A_j\ket{\psi}|\leq 2^{-q},~\forall j.
  \end{eqnarray}

  \Statex \textbf{Output:} A sample from an estimator $\hat{u} = (\hat{u}_1, \dots, \hat{u}_M)$ whose $j$-th element estimates $\langle O_j \rangle := \langle \psi | O_j | \psi \rangle$ within MSE $\epsilon^2$ as 
  $$
  \max_{j=1,2,\dots,M} \mathbb{E}[(\hat{u}_j - \langle O_j \rangle)^2] \leq \epsilon^2 
  $$
  \State $\tilde{u}_j^{(0)} \leftarrow 0$ for $j = 1, 2, \dots, M$
  \For{$q = 0, 1, \dots, q_{\max} \coloneq \lceil \log_2(1/\epsilon) \rceil$}
      \State $A_j \leftarrow {O}_j-\tilde{u}^{(q)}_j\bm{1}$
      \State Call the subroutine $\mathcal{U}_{\Upsilon}(q,\{A_j\})$ for preparing $R^{(q)}$ copies of the quantum state $\ket{\Upsilon(q)}$
      \State Apply  $(\mtr{QFT}_{G_p}^\dagger)^{\otimes M}$ on each copy 
      \State Perform computational basis measurement to obtain output
      $(k_1, \ldots, k_M) \in G_p^M$.  
      \State $g_j^{(q)} \leftarrow$ coordinate-wise medians of the measurement results \label{step:get-coordinate-wise-median}
      \State $\tilde{u}_j^{(q+1)} \leftarrow \tilde{u}_j^{(q)} + {\pi}{2^{-q}} g_j^{(q)}$.
      \For{$j=1, ..., M$}
      \If{$\tilde{u}_j^{(q+1)}\geq 1$ (or $\leq -1$)} 
          \State $\tilde{u}_j^{(q+1)} \leftarrow 1$ (or $-1$) 
      \EndIf
      \EndFor
  \EndFor
  \end{algorithmic}
\end{algorithm}

\bibliography{ref} 

\end{document}